\newif\ifdraft \drafttrue
\newif\iffull \fulltrue
\makeatletter \@input{tex.flags} \makeatother
\definecolor{DarkGreen}{rgb}{0.1,0.5,0.1}
\definecolor{DarkRed}{rgb}{0.5,0.1,0.1}
\definecolor{DarkBlue}{rgb}{0.1,0.1,0.5}
\newcommand{\INDSTATE}[1][1]{\STATE\hspace{#1\algorithmicindent}}
\newcommand{\OPT}{\mathrm{OPT}}
\newcommand{\ALG}{\mathrm{ALG}}
\newcommand{\FTP}{\mathrm{FTP}}
\newcommand{\OA}{\mathrm{OA}}
\newcommand{\RPA}{\mathrm{RPA}}
\newcommand{\heterogen}{\mathrm{OHA}}
\newcommand{\alg}{\text{Random Permutation Algorithm}}
\def\eg{{\it e.g.,\/}}
\def\iid{{\it i.i.d.\/}}
\newcommand{\ra}{\rightarrow}
\newtheorem{theorem}{Theorem}
\newtheorem{lemma}{Lemma}
\else\setcounter{secnumdepth}{0}\fi
\begin{document}
%
\title{Online Assignment of Heterogeneous Tasks \iffull \\ \fi
  in Crowdsourcing Markets}
\author{
  Sepehr Assadi\thanks{\protect{Supported in part by NSF grants CCF-1116961 and
      IIS-1447470.}}, 
  Justin Hsu\thanks{\protect{Supported in part by a grant from the Simons
      Foundation (\#360368 to Justin Hsu) and NSF grant CNS-1065060.}},
  Shahin Jabbari
  \\
  Department of Computer and Information Science, University of Pennsylvania
  \\
  \{sassadi, justhsu, jabbari\}@cis.upenn.edu}

\maketitle
\begin{abstract}
We investigate the problem of \emph{heterogeneous task assignment} in
crowdsourcing markets from the point of view of the requester, who has a
collection of tasks. Workers arrive online one by one, and each declare a set of
feasible tasks they can solve, and desired payment for each feasible task. The
requester must decide on the fly which task (if any) to assign to the worker,
while assigning workers only to feasible tasks. The goal is to maximize the
number of assigned tasks with a fixed overall budget.

We provide an online algorithm for this problem and prove an 
upper bound on the competitive ratio of this algorithm against an 
arbitrary (possibly worst-case) sequence of workers who want small 
payments relative to the requester's total budget. We further show 
an almost matching lower bound on the competitive ratio of any algorithm in this
setting. Finally, we propose a different algorithm that achieves an improved competitive 
ratio in the random permutation model, where the order of arrival of the workers is chosen uniformly at random.  
Apart from these strong theoretical guarantees, we carry out experiments on simulated data which
demonstrates the practical applicability of our algorithms. 
\end{abstract}

\section{Introduction}
\label{sec:intro}

Crowdsourcing markets have seen a remarkable rise in recent years, as more and
more markets use the internet to connect people with tasks to solve, to people
willing to solve tasks in exchange for payment. While these tasks were
originally simple tasks that could be accomplished while sitting at a
computer---say, labeling images or cleaning data---recent systems handle complex
tasks in the real world. Services like TaskRabbit let users hire workers to run
errands, like picking up a package; ride-sharing applications like Lyft or Uber
provide drivers on-demand to transport customers; other services offer meal or
grocery delivery, house cleaning, and even on-demand 
massages\footnote{www.zeel.com} to customers.

Accordingly, an important challenge for task requesters in crowdsourcing (and
more general) platforms is handling more and more \emph{heterogenous} workers
and tasks. A worker may not be able to solve all the tasks---say, they may only
be able to translate certain languages, or they may only be able to handle
time-sensitive tasks on a deadline far into the future. Among the solvable
tasks, a worker may want different payment for different tasks; a short task can
require a small payment, while a more complex task may warrant a higher payment.

At the same time, requesters must cope with a highly \emph{dynamic} flow of
workers. Since it is common for workers to work for several different platforms
simultaneously, the pool of available workers is constantly changing. The
requester may not have the luxury of seeing all the workers, and then selecting
the right workers. Instead, requesters may need to select workers in an
\emph{online} fashion, where workers arrive one by one and must be
hired (or not) as soon as they arrive.

We take aim at both of these challenges in crowdsourcing by considering the
following \emph{task assignment problem}: if workers (i) have an arbitrary set
of feasible tasks, (ii) demand heterogeneous payments by bidding for feasible tasks, and
(iii) arrive online, how can a requester assign workers to tasks in order to
maximize the number of completed tasks given a fixed overall budget?

\citet{SingerM11} were the first to study a simpler version of this problem. In
their setting, each worker has a single bid for all the tasks; all the tasks
are treated \emph{homogeneously}, and workers are assumed to be able to solve
all the tasks (if they receive a payment which is at least equal to their bids). 
We generalize their setting in two significant
ways: First, we allow workers to specify only a subset of feasible tasks that
they can handle.  Second, we work in a setting with \emph{heterogeneous} tasks:
workers can have different preferences over tasks.  As we will show, this
heterogeneity significantly complicates the task assignment problem; algorithms
for the heterogeneous case may look nothing like their counterparts in the
homogeneous case.

As a warm up, we start by providing two algorithms for the offline problem -- where the requester
knows the sequence of workers and their bids up front: first, the
optimal algorithm via min-cost flows, and second, a fixed-threshold constant factor approximation algorithm
inspired by the algorithm of \citet{SingerM11}. Then, we move to the online
setting, where we draw on techniques from the online knapsack literature.

As is typical for online algorithms, we measure the performance of our algorithm
via the \emph{competitive ratio}, i.e., the ratio of the number of tasks assigned by 
the best offline algorithm to the number of tasks assigned by the online algorithm on
the same problem. When the bids are bounded in $[1, R]$, the bids of the workers are small
compared to the budget of the requester, and the order is worst-case, we give an algorithm that
achieves an $O(R^\epsilon \ln(R))$ competitive ratio when $R \leq \epsilon
B$.\footnote{%
  In particular, when $R \ll B$, which is arguably the case in most practical
  applications, our algorithm achieves an \emph{optimal} competitive ratio of
  $O(\ln(R))$.}
The central idea is to use a moving threshold, and take all workers with bid
below the threshold. As the budget is depleted, the threshold steadily
decreases. Our algorithm is inspired by~\citet{ZhouCL08}, who give algorithms
for the online knapsack problem.  We also show a lower bound: for any (possibly
randomized) online algorithm, there exists an input that forces a competitive
ratio of at least $\Omega(\ln(R))$.


Then, we consider the \emph{random permutation} setting where the bids are
adversarial, but the order of workers is chosen uniformly at random. Here,
we measure the performance of an online algorithm by comparing the number of
tasks assigned \emph{averaged} over all permutations to the number of tasks
assigned by the offline optimal.
%
\section{Model}
\label{sec:setting}
Let us begin by defining our modeling more formally. We will use $j$ to index tasks and $i$ to
index workers throughout. We model the problem from the perspective of a requester 
who has a collection of $m$ tasks.

Each worker $i$ picks a subset of tasks $J_i\subseteq \{1, \ldots, m\}$, along with 
the numeric bid $b_{ij}$
for each task $j\in J_i$.
Workers arrive \emph{online}: the sequence of the workers
and their bids are initially unknown to the requester but once a worker arrives,
her bids for all the tasks are revealed to the requester. We assume that each
worker can be assigned to at most one task; we can relax this assumption by
having multiple copies of each worker.  We denote the total number of workers by
$n$. As customary in the crowdsourcing setting we assume workers are abundant, 
hence $n$ is large. 

We model the sequence of workers (their bids and their order of arrivals) in two different
ways.  First, we consider the case that we have no assumptions on the sequence
of workers.  This worst case scenario has been referred to as the \emph{adversarial
setting} in the literature.  Second, we consider the \emph{random permutation
model} in which we still make no assumptions on the bids of the workers but
we assume the order of their arrival is randomly permuted before being presented to the requester. We
define these two settings in more detail in the following sections. 

The requester has a budget of $B$ and has to decide on the fly which task to
assign to the worker who arrives; of course, a worker can only be assigned to a
task he is willing to do. If task $j$ is assigned to worker
$i$, then the requester pays worker $i$ \emph{at least} the price $b_{ij}$. The goal of
the requester is to maximize the number of tasks he assigns to the workers while
spending at most a budget of $B$ and satisfying all the constraints of the workers (on the tasks they are willing to solve).

Similar to the online algorithms literature, we compare the performance of our online algorithms 
to the performance of the
best \emph{offline} algorithm (denoted by $\OPT$), which can see all the bids and the order of arrival of workers 
before allocating tasks to workers.  As is standard, we measure the performance via the
\emph{competitive ratio}: the ratio of the number of tasks assigned in the
offline optimal solution to the number of tasks assigned by the online algorithm
(so competitive ratio is at least $1$, and smaller competitive ratio is more
desirable). 

\subsection{Related Work}
\label{sec:related}
Pricing and task assignments have been previously studied in the context of \emph{mechanism design} for
crowdsourcing markets.~\citet{SingerM11, SingerM13} provide mechanisms for task
assignment when tasks are homogeneous and the workers are arriving from a random
permutation model. Our work in the random permutation section generalizes their
work to heterogeneous tasks.~\citet{HoV12} consider task assignment when the
tasks are heterogeneous. However, they assume the existence of limited task types and the focus of
their work is learning qualities of the workers from stochastic observations.
\citet{SinglaK13} design price mechanism for crowdsourcing markets using 
tools from online learning.
\citet{GoelNS14} consider the heterogeneous
task assignment in the \emph{offline} setting and provide near optimal approximation
algorithms. However, they focus on the mechanism design aspect of the
problem---how to pay workers so that they report their bids truthfully. 
This line of budget feasible mechanism design is inspired by the work
of~\citet{Singer10} and has been followed up in~\citet{SingerM11,
  SingerM13}.\footnote{%
  While we ignore the mechanism design aspect of the problem in our formulation,
  as we describe at the end of the Random Permutation section, all our online
  algorithms (either with a slight modification or as they are) satisfy
  truthfulness and incentive compatibility.}
Also these pricing mechanisms have close connections to the stochastic online adwords
problem \citep{DevanurH09} and the online primal-dual
literature (see~\citet{BuchbinderJN07} and references within).

Our work is inspired by variants of the online knapsack problem.  In the
adversarial setting with homogeneous tasks, our problem is
an instance of the online knapsack problem studied by~\citet{ZhouCL08} (see references within for more information).
However, it is not clear how to formulate our problem as a knapsack problem
when the tasks are heterogeneous.

Variants of \emph{generalized online matching} and the \emph{adwords} problem
are also related to our problem~($\eg$ see~\citet{MehtaSVV05}
and~\citet{Mehta13} for an excellent survey).  The adwords problem can be
described as follows.  There are some bidders and each bidder has a fixed
budget. Queries arrive one at the time, bidders bid for the queries and the algorithm has
to decide what bidder to assign to the query. If the algorithm assigns a bidder to a query,
the bidder pays the amount that is equal to her bid. The goal is to maximize the revenue.
While one might attempt to formalize our task assignment problem as an instance of the online
adwords problem, it is not hard to see that our constraint on the total budget of the requester cannot be
written as an adwords type budget constraint for the bidders.

\section{Adversarial Setting}
\label{sec:adv}


When comparing an online algorithm with an offline algorithm, we first need to
precisely specify the inputs on which we make the comparison. Let us  first
consider the worst case for the requester: \emph{adversarial inputs}. In this
setting, there is a fixed input and order, and we compare an online algorithm in
this single input to an offline algorithm on the same input via the competitive
ratio. We are interested in bounding this ratio in the worst case, i.e., the max
over all inputs.

Of course, if we really do not make any assumption on the input, we cannot hope
to compete with an offline algorithm as the competitive ratio may be arbitrarily
high (see, for instance, ~\citet[Section 1.2]{ZhouCL08}).
 
Consequently, we restrict the adversary's power by making one main assumption on the
relationship between worker bids and the budget: the ratio of the largest bid to
the smallest bid should be small compared to the budget. More precisely, we can
scale worker bids so that $b_{ij} \in [1, R]$, and we write $R = \epsilon B$. We
will frequently consider $\epsilon$ to be small---this is appropriate for the
crowdsourcing problems we have in mind, where \emph{(i)} the scale of the budget is
much larger than the scale of payments to workers and \emph{(ii)} the tasks are not
extremely difficult, so no worker charges an exorbitantly high price. We refer to this assumption as
\emph{large market} assumption because when the
bids are small compared to the budget, the requester can hire a large number of
workers before exhausting his budget. This is in line with the common assumption
in crowdsourcing that $n$ is large.

Before diving into the technical details, let us consider which range of
parameters and competitive ratios is interesting. 
Since the bids are restricted in $[1, R]$, achieving a competitive ratio
of $R$ is trivial. So, we will be mainly interested in the following question:
Can we design an algorithm that has a competitive ratio much smaller than $R$
for any sequence of workers?

The rest of this section is organized as follows. We first investigate the
offline problem, describing how the problem can be solved optimally.  Then we
propose a simpler algorithm for the offline setting; the performance is a
constant factor off from optimal, but the simpler algorithm will be useful 
\iffull in \Cref{sec:perm}, \else later, \fi
when we will work in the \emph{random permutation} setting.  We
then move to the online setting, giving an algorithm with competitive ratio of
$O(R^\epsilon \ln(R))$ for any sequence of workers. Finally, we give a lower
bound showing that any algorithm has competitive ratio of at least
$\Omega(\ln(R))$ in the worst case. 

\subsection{The Offline Problem}
\label{sec:adv1}
Let us start by investigating the offline problem where the sequence of workers and
their bids are known up front. We first show how the offline optimal assignment can be
computed. Then we propose a simpler algorithm that approximates the offline optimal assignment
by a factor of 4. We show later in the paper that how this second
algorithm, while suboptimal, can be converted to an online algorithm.

The offline problem is a well-known problem in the crowdsourcing
literature~\citep{Singer10} and can be solved by a reduction to the
\emph{min-cost flow} problem defined as follows. Consider a graph with costs and capacities
on the edges and two nodes marked as source and target. 
Given this graph, for a demand value of flow, the goal of the min-cost flow problem is to route this amount of flow from the source 
to the target while minimizing the total flow cost over the edges, where the flow cost over an edge is equal to the amount of 
the flow passing the edge multiply by the cost of the edge.  

We now briefly describe an algorithm for solving the offline problem optimally which uses min-cost flow algorithms as a subroutine. 
Note that this algorithm is generally known in the literature and we provide it here for the sake of completeness. 
In our problem, we want to maximize the number of assigned tasks given a fixed
budget. In order to do so, we first construct the following instance of the min-cost flow problem. 
We start with a bipartite graph with workers on one side and tasks on the other.  We then draw an edge from a worker to a task if the
worker is willing to solve that task and let the cost of this edge to be equal to the bid of the worker for the task. Additionally, we attach a source node pointing to all
the workers and connect all the tasks to an added target node. Finally, we set the capacity of all edges to be $1$.

First notice that any feasible flow in this graph corresponds to an assignment of tasks and workers. Moreover, 
for any integer $F$, the minimum cost of a flow with value $F$ corresponds to the minimum budget required for 
assigning $F$ worker-task pairs.\footnote{%
  Since the amount of flow routed from the source to the target and the capacity
  of the edges are integral, the min-cost flow problem has an integral
  solution.}
Consequently, we can search over all possible $F \in [n]$, solve the min-cost flow
problem on the described graph and demand flow of $F$, and return the maximum value of $F$ where the 
minimum cost flow is at most the available budget.   

While the above approach achieves the optimal solution, we will see
in the random permutation section that using a \emph{fixed threshold} to decide which workers
to hire can be very useful. Hence, we next provide a simpler algorithm with this
feature; while this simpler algorithm does not guarantee an optimal solution anymore, we show in~\Cref{pro:42} that 
its solution is within a factor of at most $4$ from $\OPT$.

The algorithm, which we refer to as \emph{Offline Approximation Algorithm}
($\OA$) (see~\Cref{fig:9}) is mainly using the subroutine \emph{Fixed Threshold Policy} ($\FTP$). In this subroutine, given a threshold
value $p$, the algorithm goes over workers one by one and assigns a task to an
unassigned worker if the bid of the worker for that task is not bigger than $p$. 
In case there are more than one unassigned task that the worker bids $p$ or less for,
the algorithm break ties arbitrarily. The $\FTP$ subroutine is described in
\Cref{fig:5}.\footnote{%
  While we use $\FTP$ as an offline algorithm in this section, the sequential
  nature of the algorithm allows $\FTP$ to be used when workers arrive online
  (one by one).
  \iffull We exploit this feature of $\FTP$ in \Cref{sec:perm}.\fi}
\begin{algorithm}
\caption{$\FTP$}
\label{fig:5}
\begin{algorithmic}
\STATE {{\bf Input}: Threshold price $p$, budget $B$, worker bids $\{b_{ij}\}$, set of available tasks $J$.}
\INDSTATE[1]{{\bf While} $B > 0$, on input worker $i$:}
\INDSTATE[2]{Let \small{$C_i := \{ j \in J \mid b_{ij} \leq \min(p, B) \}$}.}
\INDSTATE[2]{\textbf{If} $C_i \neq \emptyset$:}
\INDSTATE[3]{{\bf Output}: $a(i) \in C_i$.}
\INDSTATE[3]{Let $B := B - b_{i, a(i)}$.}
\INDSTATE[3]{Let $J := J\setminus \{a(i)\}$.}
\INDSTATE[2]{\textbf{else}:}
\INDSTATE[3]{{\bf Output:} $a(i) := \perp$.}
\end{algorithmic}
\end{algorithm}

$\OA$ then searches over all possible values of $p$
to find the proper threshold.  Although the search space is continuous, it is
sufficient for the algorithm to restrict its search to the bids of the workers.
Hence, the running time of $\OA$ is polynomial in $m$ and $n$.
We show that the number of assignments of the $\OA$ is
least a quarter of the $\OPT$ for any sequence of workers. 

\begin{algorithm}
\begin{algorithmic}
\STATE {{\bf Input:} Worker bids $\{b_{ij}\}$, set of available tasks $J$, budget $B$.}
\INDSTATE[1]Let $Q := 0$.
\INDSTATE[1]{{\bf Foreach} $b_{ij}$:}
\INDSTATE[2]{Let $q := \FTP(b_{ij}, B, \{b_{i,j}\}, J)$.}
\INDSTATE[2]{{\bf If} $q > Q$ {\bf then}: Update $Q := q$.}
\STATE{{\bf Output} $Q$ (number of assignments) and $p^* :=  B/Q$ (the
  threshold price).}
\end{algorithmic}
\caption{$\OA$}
\label{fig:9}
\end{algorithm}

\begin{theorem} \label{pro:42}
Let $\ALG_\OA(\sigma) $ and $\OPT(\sigma)$ denote the number of tasks assigned by
the $\OA$  and the offline optimal algorithm for a sequence of workers $\sigma$,
respectively.  Then for any $\sigma$, $\OPT(\sigma)\le 4 \cdot\ALG_\OA(\sigma)$.
\end{theorem}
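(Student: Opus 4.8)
The plan is to exhibit a single threshold $p$ for which the subroutine $\FTP(p)$ already assigns at least $\OPT(\sigma)/4$ tasks; since $\OA$ returns the best outcome over all candidate thresholds (and it suffices to search bid values), this immediately gives $\OPT(\sigma) \le 4\cdot\ALG_\OA(\sigma)$. The factor $4$ will factor as $2 \times 2$: one factor coming from an averaging argument that isolates the ``cheap'' half of an optimal assignment, and one factor coming from the standard fact that a greedy maximal matching is a $2$-approximation to a maximum matching.

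First I would fix an optimal assignment consisting of $\OPT := \OPT(\sigma)$ worker--task pairs whose total cost is at most $B$, and set $p := 2B/\OPT$. By a Markov-type averaging argument, at most $\OPT/2$ of these pairs can have bid exceeding $p$ (otherwise their cost alone would exceed $(\OPT/2)\cdot(2B/\OPT) = B$), so at least $\OPT/2$ of the optimal pairs have bid at most $p$. Call this set $S$. It is a matching of size at least $\OPT/2$ in the bipartite graph $G_p$ whose edges are exactly the pairs $(i,j)$ with $b_{ij} \le p$; in particular the maximum matching in $G_p$ has size at least $\OPT/2$. Next I would observe that, ignoring the budget, $\FTP(p)$ produces a maximal matching in $G_p$: it scans workers in arrival order and assigns each worker an available cheap task whenever one exists, and a simultaneously unassigned worker and task would contradict the greedy rule at the moment that worker was processed. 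Since every maximal matching has size at least half the maximum matching, $\FTP(p)$ would assign at least $\tfrac12\cdot\tfrac{\OPT}{2} = \OPT/4$ tasks.

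The main obstacle is the interaction with the budget, since $\FTP$ caps each payment at $\min(p, B_{\mathrm{rem}})$ and halts once the budget is exhausted, so it need not build a full maximal matching in $G_p$. I would resolve this with a short case split. Each assignment costs at most $p = 2B/\OPT$, so if the remaining budget ever drops below $p$, then $\FTP$ has already spent more than $B - p$ and hence made more than $B/p - 1 = \OPT/2 - 1$ assignments, which is at least $\OPT/4$ once $\OPT \ge 4$. Otherwise the remaining budget stays at least $p$ throughout, the effective threshold $\min(p, B_{\mathrm{rem}})$ equals $p$ at every step, $\FTP$ processes all workers, and the greedy maximal-matching argument applies verbatim to yield at least $\OPT/4$. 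Either way $\FTP(p) \ge \OPT/4$ for $\OPT \ge 4$.

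Finally I would dispatch the remaining cases and the discreteness of the search. For $\OPT \le 4$ the bound is immediate: whenever $\OPT \ge 1$ there is a feasible pair, so $\OA$ (taking the threshold equal to any bid used by the optimal assignment) assigns at least one task, and $1 \ge \OPT/4$ there. To respect that $\OA$ only tries thresholds equal to worker bids, I would replace $p$ by the largest bid not exceeding $2B/\OPT$, which exists because $S \neq \emptyset$; since both the edge set $G_p$ and the per-step feasibility tests change only at bid values, $\FTP$ behaves identically on this bid threshold, so $\ALG_\OA(\sigma) \ge \FTP(p) \ge \OPT/4$, completing the proof.
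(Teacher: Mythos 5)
Your proposal is correct and follows essentially the same route as the paper's proof: a fixed threshold of roughly $2B/\OPT$ (your Markov-style argument is exactly the paper's Lemma~\ref{lem:1} applied to the median of the optimal bids, which satisfies $p^*\le 2B/\OPT$), followed by the same two-case analysis of $\FTP$ --- either the budget is nearly spent, giving about $B/p \ge \OPT/2$ assignments, or $\FTP$ halts with a maximal matching in the thresholded bipartite graph, which is at least half of a maximum matching of size $\ge \OPT/2$. Your treatment of the budget boundary (the $-1$ term, the $\OPT\le 4$ case) and of the restriction of $\OA$'s search to bid values is more careful than the paper's, but these are refinements of the same argument rather than a different approach.
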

Before proving~\Cref{pro:42}, we state the following useful lemma.
\begin{lemma}[{\citet[Lemma 3.1]{SingerM13}}]
\label{lem:1}
Let $a_1, \ldots a_k$ be a sorted sequence of $k$ positive numbers in
an increasing order such that $\Sigma_{i=1}^k a_i \le B$. Then $a_{\lfloor k/2\rfloor}\cdot k/2\le B$.
\end{lemma}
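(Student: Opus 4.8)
The plan is to prove the inequality by discarding the small half of the sequence and lower-bounding the budget using only the large half. Since the numbers are sorted in increasing order, every entry whose index is at least $\lfloor k/2\rfloor$ is at least as large as the ``median'' entry $a_{\lfloor k/2\rfloor}$. So I would simply restrict the sum $\sum_{i=1}^k a_i \le B$ to these large indices, replace each such term by the smaller quantity $a_{\lfloor k/2\rfloor}$, and count how many such terms there are.

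Concretely, I would write the chain
\begin{equation*}
B \;\ge\; \sum_{i=1}^k a_i \;\ge\; \sum_{i=\lfloor k/2\rfloor}^{k} a_i \;\ge\; \bigl(k - \lfloor k/2\rfloor + 1\bigr)\cdot a_{\lfloor k/2\rfloor},
\end{equation*}
where the first inequality is the hypothesis, the second drops the positive terms with index below $\lfloor k/2\rfloor$ (using that all $a_i$ are positive), and the third uses monotonicity: for every $i \ge \lfloor k/2\rfloor$ we have $a_i \ge a_{\lfloor k/2\rfloor}$, and there are exactly $k - \lfloor k/2\rfloor + 1$ such indices.

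It then remains to check that the multiplicative factor is at least $k/2$, i.e.\ $k - \lfloor k/2\rfloor + 1 \ge k/2$. This is the only place requiring a small case split: when $k$ is even, $\lfloor k/2\rfloor = k/2$, giving $k/2 + 1 \ge k/2$; when $k$ is odd, $\lfloor k/2\rfloor = (k-1)/2$, giving $(k+1)/2 + 1 \ge k/2$. In both cases the factor exceeds $k/2$, so combining with the displayed chain yields $a_{\lfloor k/2\rfloor}\cdot k/2 \le B$, as claimed.

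I do not expect any serious obstacle here, since the argument is a one-line counting bound on the upper half of a sorted sequence. The only point demanding care is getting the index count $k - \lfloor k/2\rfloor + 1$ right (in particular remembering the $+1$ from including the endpoint $i = \lfloor k/2\rfloor$) and confirming it dominates $k/2$ in both parities of $k$; everything else is immediate from monotonicity and positivity.
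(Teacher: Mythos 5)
Your proof is correct and is essentially the paper's own argument: the paper proves the lemma via exactly the two observations you formalize, namely that the second half of the sorted sequence sums to at most $B$ and that each of its entries is at least $a_{\lfloor k/2\rfloor}$. Your version just adds the explicit index count $k - \lfloor k/2\rfloor + 1$ and the parity check, which the paper leaves implicit.
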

The proof of~\Cref{lem:1} is the result of the following two simple observations: \emph{(i)} the sum of the second half of the numbers
is at most $B$ and \emph{(ii)} each of the numbers in the second half is at least as big as the median of the sequence.

\begin{proof}[Proof of~\Cref{pro:42}]
Consider all the (worker, task) pairs in the offline optimal where a pair simply denotes that the task is assigned to the worker.
Sort these pairs in an increasing order of the bids.  Let $p^*$ denote the median of the
bids. By \Cref{lem:1}, we can assign half of these (worker, task) pairs, pay all the workers price $p^*$ (clearly an upper bound on the bid of every considered worker) and be
sure not to exceed the budget.  However, there are two problems: \emph{(i)} we
do not know which (worker, task) pairs are in the optimal solution and hence,
\emph{(ii)} we do not know the value $p^*$.

To deal with problem \emph{(i)}, suppose we somehow knew $p^*$. We can then 
assign a worker to any task where the worker has a bid of at most $p^*$ and continue until
\emph{(a)} we exhaust the budget or \emph{(b)} there are no more workers or
tasks left.

In case \emph{(a)}, the algorithm made at least $B/p^*$ assignments and
by~\Cref{lem:1}, we know this number is at least $\OPT/2$.  For case \emph{(b)},
consider the bipartite graph between the tasks and the workers described
earlier for the min-cost flow problem construction and remove all the edges with cost bigger than $p^*$. 
Since $p^*$ is the median of the bids in the optimal solution, 
we know that in this graph, there exists a matching $M$ of size at least $\OPT/2$ between the workers 
and the tasks. On the other hand, 
since we know $\OA$ terminated in this case because there are no tasks or workers left, we know that $\OA $ has arrived at 
a \emph{maximal} matching. Finally, since the size of any maximal matching is at least 
half of the size of the maximum matching, we know the number of assignments of $\OA$ is at least a half of the number of assignments of $M$. 
So $\ALG_\OA(\sigma) \geq \OPT(\sigma)/4$ if we knew $p^*$.

To deal with problem \emph{(ii)} (not knowing $p^*$), we run the $\FTP$ for all
the values that $p^*$ can take and return the maximum number of assignments as
our solution.\footnote{%
  Note that this might result in $\OA$ using a threshold $p$ which is different
  than $p^*$ but since $\OA$ picks a $p$ that maximizes the number of
  assignments we know the number of assignments made by $\OA$ using $p$ is at
  least as large as the number of assignments made by $\OA$ using $p^*$.}
\end{proof}

\paragraph*{A brief detour: the homogeneous case.}
As we discussed in the introduction, we make a strong distinction between
the setting with heterogeneous tasks and the setting with homogeneous tasks. 
To highlight this difference, we show that 
in the homogeneous case, the following simple algorithm computes the offline optimal: sort all the workers by their bids and
(if possible) assigns a task to the sorted workers until the budget is exhausted or there are no more tasks or
workers left.

It is easy to verify that this greedy construction indeed computes the best
offline assignments when the workers are homogeneous. However, this method will
not result in the optimal offline assignment when the tasks are heterogeneous.
For example consider the following toy problem with two workers and two tasks
where each workers is willing to do both tasks. Let the bids of worker $1$ and $2$ to be $(0.4, 0.5)$
and $(0.45, 0.7)$ for the two tasks, respectively. With a
budget of $1$, the optimal offline assignment is to assign task $1$ to worker $2$
and task $2$ to worker $1$. However, the homogeneous greedy algorithm will assign
task $1$ to worker $1$ and will not have enough budget left to assign a task to
worker $2$.

\subsection{The Online Problem}
Let's now move to the online setting, where the workers arrive online and our
algorithm must decide which (if any) task to assign to a worker before seeing the
remaining workers. We propose an online algorithm and prove an upper bound on the 
competitive ratio of our algorithm which holds against \emph{any} sequence of workers.
Our upper bound crucially depends on the large market assumption where we assume that the bids of the workers are small
compared to the budget. So throughout this section, we assume the bids are bounded in $[1, R]$
where $R\le \epsilon B$ for some small value of $\epsilon$.

Our \emph{Online Heterogeneous Algorithm} ($\heterogen$) is inspired
by an algorithm from the online knapsack literature proposed by
\citet{ZhouCL08}, with an analysis modified for our setting.  The idea is to use
a potential function $\phi:[0,1]\ra[1,R]$ based on the fraction of budget spent so far as an
input. The $\phi$ function acts as a price threshold: the algorithm assigns a
task to a worker only if the bid of the worker for any of the remaining
unassigned tasks is below the value of the potential function---intuitively, as
the budget shrinks, the algorithm becomes pickier about which workers to
hire. Once a worker is selected, the algorithm greedily assigns a task arbitrarily from the remaining set of tasks. 
See \Cref{fig:4} for a  pseudo-code.\footnote{%
  In \Cref{fig:4}, $e$ is the base of natural logarithm.}

\begin{algorithm}
\begin{algorithmic}
\STATE {{\bf Input}: Available tasks $J$, budget
    $B$}.
\STATE {{\bf Online input}: Worker bids $\{b_{ij}\} \in [1, R]$.}
\INDSTATE[1]{{\bf Define} $\phi(x) = \min((R\cdot e)^{1-x}, R)$.}
\INDSTATE[1]{Let $x := 0, f := B$.}
\INDSTATE[1]{{\bf While} $B > 0$, on input $i$:}
\INDSTATE[2]{Let \small{$C_i := \{ j \in J \mid b_{ij} \leq \min(f, \phi(x))\}$.}}
\INDSTATE[2]{{\bf If} $C_i \neq \emptyset$ {\bf then}}
\INDSTATE[3]{{\bf Output} $a(i) \in C_i$.}
\INDSTATE[3]{Let $x := x + b_{i, a(i)}/B$.}
\INDSTATE[3]{Let $f := f - b_{i, a(i)}$.}
\INDSTATE[3]{Let $J := J\setminus \{a(i)\}$.}
\INDSTATE[2]{\textbf{else}:}
\INDSTATE[3]{{\bf Output:} $a(i) := \perp$.}
\end{algorithmic}
\caption{$\heterogen$}
\label{fig:4}
\end{algorithm}

\begin{theorem}
\label{thm:multiple-deadline-adv-1}
Let $\ALG_\heterogen(\sigma)$ and $\OPT(\sigma)$ denote the number of tasks
assigned by the $\heterogen$  and the offline optimal algorithm for a sequence
of workers $\sigma$, respectively.  Then for any $\sigma$
with bids in $[1, R]$ such that $R \leq \epsilon B$,
$\OPT(\sigma)/\ALG_\heterogen(\sigma) \le (R\cdot e)^\epsilon \left(\ln(R)+3\right).$
\end{theorem}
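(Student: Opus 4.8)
The plan is to analyze $\heterogen$ through its threshold function $\phi$, following the potential-function style of \citet{ZhouCL08} but adapted to the bipartite matching structure of our problem. Write $x_f$ for the fraction of the budget spent when the algorithm halts, and let $f_f = B(1-x_f)$ be the remaining budget. The two quantities I will control are a lower bound on $\ALG_\heterogen(\sigma)$ in terms of $x_f$, and an upper bound on $\OPT(\sigma)$ obtained by charging the tasks OPT assigns against the tasks $\heterogen$ assigns. The final ratio then follows from a case analysis on whether, at termination, the binding constraint in the test $b_{ij}\le\min(f,\phi(x))$ is the threshold $\phi(x_f)$ or the leftover budget $f_f$; the exponential shape of $\phi$ together with the assumption $R\le\epsilon B$ are exactly what make this case analysis close.

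First I would lower bound the number of assignments. If the $t$-th accepted worker is assigned at budget fraction $x_{t-1}$ and costs $c_t$, then the acceptance rule forces $c_t\le\phi(x_{t-1})$, i.e.\ $B(x_t-x_{t-1})\le\phi(x_{t-1})$, so each assignment contributes at least $\tfrac{B(x_t-x_{t-1})}{\phi(x_{t-1})}$ to the count of $1$. Since $\phi$ is decreasing, replacing $\phi(x_{t-1})$ by $\phi(x)$ on $[x_{t-1},x_t]$ and summing telescopes into an integral:
\[
\ALG_\heterogen(\sigma)\ \ge\ B\int_0^{x_f}\frac{dx}{\phi(x)}\ =\ \frac{B\,(R e)^{x_f-1}}{\ln R+1},
\]
where the closed form (valid once $x_f$ is past the point $x_0=1/(\ln R+1)$ where $\phi$ stops being capped at $R$) relies on the fact that $\phi(x)=(Re)^{1-x}$ is chosen precisely so the capped and uncapped pieces of the integral cancel.

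Next I would upper bound $\OPT(\sigma)$ by a charging argument that exploits the matching structure. Split OPT's assigned tasks into those $\heterogen$ also assigns (at most $\ALG_\heterogen(\sigma)$ of them) and those it never assigns. For a task $j$ that $\heterogen$ never assigns, $j$ is available throughout the run; let $i$ be the worker OPT matches to $j$. When $i$ arrives, either $\heterogen$ assigned $i$ to a different task, or it rejected $i$. Because OPT is a matching, distinct such tasks have distinct workers, so the first kind is again bounded by $\ALG_\heterogen(\sigma)$. For the second kind, rejection with $j$ still available means $b_{ij}>\min(f,\phi(x))$ at $i$'s arrival, which by monotonicity of $f$ and $\phi$ is at least $\min(f_f,\phi(x_f))$; since these tasks lie in OPT's solution their bids sum to at most $B$, so there are fewer than $B/\min(f_f,\phi(x_f))$ of them. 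Hence $\OPT(\sigma)<2\,\ALG_\heterogen(\sigma)+B/\min(f_f,\phi(x_f))$.

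Finally I would combine the bounds through two regimes. If $\min(f_f,\phi(x_f))=\phi(x_f)$, then the rejection term is $B/\phi(x_f)=B(Re)^{x_f-1}\le(\ln R+1)\ALG_\heterogen(\sigma)$, giving $\OPT(\sigma)\le(\ln R+3)\ALG_\heterogen(\sigma)$. If instead $\min(f_f,\phi(x_f))=f_f$, then $f_f<\phi(x_f)\le R\le\epsilon B$ forces $x_f>1-\epsilon$; using the crude bounds $\OPT(\sigma)\le B$ (every bid is at least $1$) and the $\ALG$ lower bound evaluated at $x_f>1-\epsilon$, the ratio is at most $(Re)^{\epsilon}(\ln R+1)$. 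Both regimes are dominated by $(Re)^{\epsilon}(\ln R+3)$. I expect the main obstacle to be the rejection-charging step: it is where the bipartite matching constraint, the monotone threshold, and OPT's budget must be reconciled at once, and it is also what dictates the precise exponential form of $\phi$; the $(Re)^\epsilon$ loss is unavoidable and enters only through the budget-exhaustion regime.
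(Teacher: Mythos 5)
Your high-level architecture (a charging argument bounding $\OPT$ by $2\,\ALG_\heterogen(\sigma)$ plus a rejection term, then a lower bound on $\ALG_\heterogen(\sigma)$ via the potential $\phi$) is essentially the paper's, and your charging step is correct --- indeed, by tracking $\min(f_f,\phi(x_f))$ you handle workers rejected for lack of remaining budget (not just because of the threshold), a case the paper's own proof treats more casually. The genuine gap is your sum-to-integral step. Since $\phi$ is decreasing, $1/\phi$ is increasing, and $\sum_t B(x_t-x_{t-1})/\phi(x_{t-1})$ is a \emph{left} Riemann sum of the increasing function $B/\phi$: replacing $\phi(x_{t-1})$ by $\phi(x)\le\phi(x_{t-1})$ on $[x_{t-1},x_t]$ makes each term \emph{larger}, so the sum is at most $B\int_0^{x_f}dx/\phi(x)$, not at least. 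Your claimed bound $\ALG_\heterogen(\sigma)\ge B(Re)^{x_f-1}/(\ln R+1)$ is therefore unproven, and it is false in general (take accepted workers whose bids equal the current threshold, so the sum equals $\ALG_\heterogen(\sigma)$ exactly while the integral is strictly larger once $x_f$ passes the cap point $c=1/(1+\ln R)$). The repair --- and this is exactly what the paper does --- is to invoke the large-market assumption here: each step has $x_t-x_{t-1}=b_{ij}/B\le R/B\le\epsilon$, so the left Riemann sum dominates the integral shifted by one step, $\sum_t (x_t-x_{t-1})/\phi(x_{t-1})\ \ge\ \int_0^{x_f-\epsilon}dx/\phi(x)$, at the cost of a factor $(Re)^{\epsilon}$. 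Consequently your closing claim that the $(Re)^\epsilon$ loss ``enters only through the budget-exhaustion regime'' is precisely backwards: in a correct analysis it enters through this discretization, in both regimes.

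The error is load-bearing. In your first regime the repair is harmless: one gets $B/\phi(x_f)\le(Re)^\epsilon(\ln R+1)\ALG_\heterogen(\sigma)$ and hence the stated bound (after also checking the corner $x_f<c+\epsilon$, where either the rejection set is empty because $b_{ij}>\phi(x_f)=R$ is impossible, or $\ALG_\heterogen(\sigma)\ge Bc/R$ suffices). But in your budget-exhaustion regime you now pay $(Re)^\epsilon$ twice --- once from $x_f>1-\epsilon$ and once from the shift --- yielding $\OPT(\sigma)\le(Re)^{2\epsilon}(\ln R+1)\,\ALG_\heterogen(\sigma)$, which exceeds the theorem's constant $(Re)^\epsilon(\ln R+3)$ whenever $(Re)^\epsilon>(\ln R+3)/(\ln R+1)$. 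So, once the Riemann-sum direction is fixed, your argument proves a bound of the same asymptotic form, $(Re)^{O(\epsilon)}(\ln R+O(1))$, but not the inequality as stated; closing that residual gap would require either tightening the budget-exhaustion case or absorbing it differently, as the paper implicitly does by folding the entire rejection term into the single quantity $B/\phi(X)$.
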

\begin{proof}
Fix a sequence of workers $\sigma$. Let $S=\{(i,j)\}$ be the set of (worker, task) pairs assigned by the $\heterogen$
where $i$ and $j$ index workers and tasks, respectively. Also let $S^*=\{(i^*, j^*)\}$
be the offline optimal. We want to bound $\OPT(\sigma)/\ALG_\heterogen(\sigma)=|S^*|/|S|$.

Let $x_i$ and $X$ denote the fraction of the budget used by the $\heterogen$
when worker $i$ arrives and upon termination, respectively.  We will analyze the
(worker, task) pairs in three stages.  First, consider the common (worker, task)
pairs which we denote by $(i, j) \in S\cap S^*$.  Let $W=\sum_{(i,j)\in S\cap
  S^*}b_{ij}$ denote the total bid of such workers. Since each worker $i$ in the
common part who is assigned to task $j$ is picked by the $\heterogen$, it must
be that $b_{ij} \le \phi(x_i)$.  Therefore,
\begin{equation}
  \label{eq:3-1}|S\cap S^*| \ge \sum_{(i,j) \in S\cap S^*}
  \frac{b_{ij}}{\phi(x_i)}.
\end{equation}

Second, consider $(i,j) \in S^*\setminus S$.  This can happen either because
\emph{(i)} the worker $i$ is assigned to a task different than $j$ by
$\heterogen$, or \emph{(ii)} the worker $i$ is not assigned to any task by
$\heterogen$.  We know the number of pairs that satisfy the condition \emph{(i)}
is at most $|S|$ because all such workers are assigned to a task by both
$\heterogen$ and the offline optimal.

For the pairs that satisfy condition \emph{(ii)}, either \emph{(a)} $b_{ij} \le
\phi(x_i)$ or \emph{(b)} $b_{ij} > \phi(x_i)$.  It is easy to see that in case
\emph{(a)}, $\heterogen$ assigned task $j$ to some other worker $i'$ who arrived
before $i$.  Again we know this can happen at most $|S|$ times. For case
\emph{(b)}, since $\phi$ is non-increasing then $b_{ij} > \phi(x_i) \ge \phi(X)$
for all such pairs $(i,j)$. Since the offline optimal spent a budget of $W$ for
hiring workers in $S^*\cap S$, then it has a budget of at most $B-W$ to hire
workers in case \emph{(b)}.  Since all the pairs in case \emph{(b)} have $b_{ij}
> \phi(x_i)\ge \phi(X)$, the offline optimal can hire at most $(B-W)/b_{ij}\le
(B-W)/\phi(X)$ workers. Adding up cases \emph{(i)} and \emph{(ii)}, we can bound
\begin{equation}
\label{eq:2-1}
|S^*\setminus S|\le 2|S| + \frac{(B-W)}{\phi(X)}.
\end{equation}

Finally, consider $(i,j) \in S\setminus S^*$.  Since $b_{ij} \le \phi(x_i)$ for
all such pairs, we know
\begin{equation}
  \label{eq:1-1}|S\setminus S^*| \ge \sum_{(i,j) \in S\setminus S^*}
  \frac{b_{ij}}{\phi(x_i)}.
\end{equation}
Putting all the pieces together, \Cref{eq:3-1,eq:2-1,eq:1-1} yield

\begin{align}
&\frac{\OPT(\sigma)-2\cdot
  \ALG_\heterogen(\sigma)}{\ALG_\heterogen(\sigma)}=\frac{|S\cap
  S^*|+|S^{*}\setminus S|-2|S|}{|S\cap S^*|+|S\setminus S^*|} 
  \label{equation}
\le \frac{\sum_{(i,j)\in S\cap S^*} b_{ij}/\phi(x_i)+|S^{*}\setminus
  S|-2|S|}{\sum_{(i,j)\in S\cap S^*} b_{ij}/\phi(x_i)+|S\setminus S^*|}.
\end{align}
To see why the inequality holds, first note that we know $|S\cap S^*|\ge
\sum_{(i,j)\in S\cap S^*} b_{ij}/\phi(x_i)$ by~\Cref{eq:3-1}. Now if $\OPT =
|S^*| < 3|S|$ then we are done. Otherwise,
\begin{align*}
  |S^*| &\geq 3 |S| \implies
  |S\cap S^*| + |S^{*}\setminus S| \ge |S\setminus S^*| + |S\cap S^*|+2|S| \implies
  |S^{*}\setminus S| - 2|S| \ge |S\setminus S^*| .
\end{align*}
Hence, the inequality holds because if $a\ge b$ and $c \ge d$ then $(a+c)/(a+d)\le (b+c)/(b+d)$. 

We bound the right hand side of~\Cref{equation} as follows.
\begin{align*}
\frac{\sum_{(i,j)\in S\cap S^*} b_{ij}/\phi(x_i)+|S^{*}\setminus S|-2|S|}{\sum_{(i,j)\in S\cap S^*} b_{ij}/\phi(x_i)+|S\setminus S^*|}
&\le \frac{\sum_{(i,j)\in S\cap S^*} b_{ij}/\phi(x_i)+2|S| + (B-W)/\phi(X)-2|S|}{\sum_{(i,j)\in S\cap S^*} b_{ij}/\phi(x_i)+|S\setminus S^*|}\\
&= \frac{\sum_{(i,j)\in S\cap S^*} b_{ij}/\phi(x_i)+(B-W)/\phi(X)}{\sum_{(i,j)\in S\cap S^*} b_{ij}/\phi(x_i)+|S\setminus S^*|}\\
&\le \frac{\sum_{(i,j)\in S\cap S^*} b_{ij}/\phi(X)+(B-W)/\phi(X)}{\sum_{(i,j)\in S\cap S^*} b_{ij}/\phi(x_i)+|S\setminus S^*|}\\
&= \frac{W/\phi(X)+(B-W)/\phi(X)}{\sum_{(i,j)\in S\cap S^*} b_{ij}/\phi(x_i)+|S\setminus S^*|}\\
&= \frac{B/\phi(X)}{\sum_{(i,j)\in S\cap S^*} b_{ij}/\phi(x_i)+|S\setminus S^*|}\\
&\le \frac{B/\phi(X)}{\sum_{(i,j)\in S\cap S^*} b_{ij}/\phi(x_i)+\sum_{(i,j)\in S\setminus S^*}b_{ij}/\phi(x_i)}\\
&= \frac{B/\phi(X)}{\sum_{(i,j)\in S}b_{ij}/\phi(x_i)}\\
&=\frac{1}{\phi(X)\sum_{(i,j)\in S}\Delta(x_i)/\phi(x_i)},
\end{align*}
where $\Delta(x_i) := x_{i+1}-x_i$. The first inequality is due to
\Cref{eq:2-1}, the second is due to monotonicity of $\phi$ and the third
inequality is due to \Cref{eq:1-1}.  Since $(i, j) \in S$, $\heterogen$ assigns
task $j$ to worker $i$ and increases the fraction of the budget consumed by
$b_{ij}/B$, so $b_{ij}/B = \Delta(x_i)$.

We now estimate the sum with an integral.  If $\Delta(x_i)\le \epsilon$,
\begin{align*}
  \sum_{(i,j)\in S}\Delta(x_i) \frac{1}{\phi(x_i)}\ge\int_{0}^{X-\epsilon} \frac{1}{\phi(x_i)} dx.
\end{align*}
Letting $c=1/(1+\ln(R))$, we have $\phi(x)=R \text{ if } x \le c$.  Similar to
\citep{ZhouCL08}, we bound the integral as follows.
\begin{align*}
\int_{0}^{X-\epsilon} \frac{1}{\phi(x_i)} dx &= \int_{0}^c \frac{1}{R} dx + \int_{c}^{X-\epsilon} \frac{1}{\phi(x_i)} dx\\
& = \frac{c}{R} + \frac{1}{Re} \cdot \frac{1}{1+\ln
  (R)}((Re)^{X-\epsilon}-(Re)^c)\\
& = \left(\frac{c}{R} - \frac{1}{Re} \cdot \frac{1}{1+\ln
    (R)}(Re)^c\right)+\frac{1}{Re} \cdot \frac{1}{\ln (R) +1}(Re)^{X-\epsilon}\\
& = \frac{1}{Re} \cdot \frac{1}{1+\ln (R)}(Re)^{X-\epsilon}\\
&=\frac{1}{\phi(X)}\cdot \frac{(Re)^{-\epsilon}}{1+\ln(R)}.
\end{align*}
It is easy to show by algebraic manipulation that the first term in the 3rd line
is equal to $0$.  Therefore,
\begin{align*}
&\frac{\OPT(\sigma)-2\cdot \ALG_\heterogen(\sigma)}{\ALG_\heterogen(\sigma)}\le\frac{1}{\phi(X)\sum_{(i,j)\in S}\Delta(x_i)/\phi(x_i)}\le (Re)^{\epsilon}(\ln(R)+1).
\end{align*}
Thus, $\OPT(\sigma) \le ((Re)^\epsilon (\ln(R)+3)
\cdot\ALG_\heterogen(\sigma)$, as desired.
\end{proof}


\subsection{A Lower Bound on the Competitive Ratio}
To wrap up this section, we show that if the large market assumption is the only
assumption on the sequence of workers, then no algorithm (even randomized) can
achieve a constant competitive ratio. We prove the result for the special case
that all the tasks are homogeneous.

The proof is very similar to the lower bound for the competitive ratio in
a variant of the online knapsack problem studied by \citet{ZhouCL08}. In this variant,
items arrive one by one online and we have a knapsack with some known capacity.
Each item has a weight and a utility parameter and it is assumed that the
utility to weight ratio for all the items are within a bounded range.
Furthermore, it is assumed that the weight of each item is small compared to the
capacity of the knapsack (similar to our large market assumption).  
The algorithm have to decide whether to pick an item or not upon
arrival and the goal is to maximize the utility of the picked items while
satisfying the knapsack capacity constraint.  Since our setting when the tasks
are homogeneous is a special case of the online
knapsack variant, the lower bound for that problem might not necessarily provide
us with a lower bound. However, we show that with a bit of care, we can achieve the same
lower bound using a similar construction.

The main idea of the lower bound is to construct hard sequences of worker arrivals.
A hard sequence can be described as follows.  The sequence starts with workers
with maximum bid, $R$, and then the following workers progressively have smaller
and smaller bids compared to the preceding workers.  Then at some random point
until the end of the sequence only workers with bid $R$ appear in the sequence.
Intuitively, these sequences are hard because no algorithm can foresee whether
the bids will decrease (so it should wait for cheaper workers) or increase (so it should spend
its budget on the current workers).

\iffull
\begin{restatable}{theorem}{thmlb}
\else
\begin{theorem}
\fi
\label{pro:lb1d}
For any (possibly randomized) online algorithm, there exists a set of sequences
of worker arrivals satisfying the large market assumption (all bids in $[1, R]$ and $R \ll B$) 
such that the competitive ratio of the algorithm on the sequence is at least
$\Omega\left(\ln(R)\right)$.
\iffull
\end{restatable}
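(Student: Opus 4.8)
The plan is to exhibit an explicit family of hard sequences living \emph{inside} the homogeneous task-assignment world, and then to apply a minimax (Yao-type) averaging argument so the bound also covers randomized algorithms. A lower bound for the more general online knapsack problem need not transfer to our special case, because the hard knapsack instances may assign unequal utilities, whereas here every completed task is worth exactly one; so I would re-build the instances so that each arriving worker is a unit-utility item whose ``weight'' is its bid, and I would verify throughout that the budget---not the number of tasks---is the binding constraint.

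Concretely, I would fix $k=\lfloor \ln R\rfloor$ geometric bid levels $v_\ell = R\,e^{-\ell}$ for $\ell=0,\dots,k$, so that $v_0=R$, $v_k\in[1,e]$, and all bids lie in $[1,R]$. For each $j\in\{0,\dots,k\}$ let $\sigma_j$ present a long block of workers bidding $v_0$, then a block bidding $v_1$, and so on down to a block bidding $v_j$, after which the sequence stops (optionally padded with bid-$R$ workers to a common length, which cannot help the algorithm since those are the most expensive). I would take $m\ge B$ tasks and at least $B$ workers per block, so that on $\sigma_j$ the offline optimum hires the cheapest workers and obtains $\OPT(\sigma_j)=B/v_j$ (integrality is harmless since $R\ll B$). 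The crucial structural observation is that $\sigma_j$ and $\sigma_{j'}$ agree on their first $\min(j,j')+1$ blocks; hence, by online indistinguishability, the (expected) number $n_\ell$ of workers any fixed algorithm hires during block $\ell$ is common to all instances with $j\ge\ell$. Writing $N_j=\sum_{\ell=0}^j n_\ell$ for the algorithm's count on $\sigma_j$, the hard budget constraint gives $\sum_{\ell=0}^k n_\ell v_\ell\le B$.

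To defeat every (possibly randomized) algorithm at once, I would average over the instances with weights $p_j\propto v_j$ rather than uniformly---this is the key idea, chosen precisely so that the geometric growth of $\OPT(\sigma_j)=B/v_j$ is flattened. Swapping the order of summation (Abel summation) and using $\sum_{j\ge\ell}v_j\le v_\ell/(1-1/e)$, the weighted expected performance telescopes into the budget: $\sum_j p_j N_j \le \frac{1}{(1-1/e)Z}\sum_\ell n_\ell v_\ell \le \frac{B}{(1-1/e)Z}$, where $Z=\sum_\ell v_\ell$. On the other hand $\sum_j p_j\,\OPT(\sigma_j)=\frac1Z\sum_j v_j\cdot(B/v_j)=(k+1)B/Z$. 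Dividing, the ratio of expected offline to expected online value is at least $(k+1)(1-1/e)=\Omega(\ln R)$; since a weighted average of the per-instance ratios cannot exceed their maximum, some $\sigma_j$ forces a competitive ratio of $\Omega(\ln R)$, and linearity of expectation carries this conclusion through to randomized algorithms.

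I expect the main obstacle to be the averaging step rather than the instance construction: one must identify the correct input distribution ($p_j\propto v_j$) and recognize that the budget constraint, once re-summed against these weights, \emph{exactly} caps the algorithm's total assignments---a uniform average would not yield this. A secondary point requiring care is confirming that the task cap and the one-worker-per-task constraint are never active (guaranteed by $m\ge B$ and sufficiently long blocks), so that the instances genuinely live in our homogeneous setting and the adaptation of the online knapsack lower bound of \citet{ZhouCL08} is legitimate.
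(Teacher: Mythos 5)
Your proposal is correct, and it rests on the same hard-instance family as the paper's proof: prefix-nested sequences of blocks with geometrically decreasing bids, padded at the end with bid-$R$ workers, exactly in the spirit of the paper's adaptation of Theorem~2.2 of \citet{ZhouCL08}. Where you genuinely differ is in how the averaging over instances is carried out. The paper invokes Yao's minimax principle explicitly: it parametrizes a deterministic algorithm by the budget fractions $f_v$ spent at each bid level, places a near-uniform distribution on the instances (weight $\eta/((k+1)\eta+1)$ on each of $I_0,\dots,I_{k-1}$ and slightly more on $I_k$), bounds the \emph{expectation of the per-instance ratio} $\sum_u p_u \,\ALG_u/\OPT_u$, and recovers $\Omega(\ln R)$ only in the limit $\eta \to 0$. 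You instead parametrize the (possibly randomized) algorithm by its expected hires $n_\ell$ per block, choose weights $p_j \propto v_j$, and bound the \emph{ratio of expectations} by telescoping the weights against the single budget inequality $\sum_\ell n_\ell v_\ell \le B$, handling randomization directly by linearity of expectation rather than through Yao's principle as a black box. This buys a limit-free calculation with an explicit constant, $(1-1/e)(\lfloor \ln R\rfloor + 1)$, at the cost of having to discover the right weighting; the paper's route stays closer to a citation of the knapsack bound. One step you should tighten: the claim that the bid-$R$ padding ``cannot help the algorithm'' is not literally true. Once block $j$ ends, the algorithm learns it is facing $\sigma_j$ and can spend all leftover budget on padding workers, gaining up to $B/R$ extra assignments, which your $N_j = \sum_{\ell \le j} n_\ell$ does not count. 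The fix is one line: these extra hires contribute at most $\sum_j p_j \cdot B/R = B/R \le B/\bigl((1-1/e)Z\bigr)$ to the weighted algorithm value, since $Z \le R/(1-1/e)$, so your bound degrades by at most a factor of $2$ and the conclusion stands. (The paper's own write-up has the same looseness, as its $f_v$ parametrization silently ignores purchases made after the instance is revealed.)
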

\else
\end{theorem}
\fi

\newcommand{\lbproof}{
\begin{proof}
We modify the proof of Theorem 2.2 by \citet{ZhouCL08} to fit
our problem formulation. 

We use Yao's \emph{minimax principle}; by constructing a distribution over a set
of instances and showing that no \emph{deterministic} algorithm can achieve a expected
competitive ratio which is better than $\ln(R)+1$ on these instances, Yao's
principle implies that no \emph{randomized} algorithm can beat this
competitive ratio in the worst-case~\citep{Yao77}.

To construct the distribution, fix $\eta\in(0,1)$, let $k$ be the smallest
integer such that $(1-\eta)^k\le 1/R$, and define $k+1$ instances indexed by
$I_0$ to $I_k$ as follows.  $I_0$ contains $B/R$ identical workers all with bids
equal to $R$. For all $u>0$, $I_u$ is $I_{u-1}$ followed by $B/(R(1-\eta)^u)$
workers all with bids equal to $R(1-\eta)^u$. Since these instances have
different length we pad all the instances with enough workers with bid $R$ so
that all the instances have the same length.

We specify a distribution $D$ by $k+1$ values $p_0, \ldots, p_k$ where $p_u$ denotes the
probability of occurrence of instance $I_u$. Let
\begin{align*}
p_0 = p_1 = \ldots &= p_{k-1} := \frac{\eta}{(k+1)\eta+1}
\quad \text{ and } \quad
p_k := \frac{1+\eta}{(k+1)\eta+1}.
\end{align*}

On this distribution of inputs, any deterministic algorithm is fully specified
by the fraction of budget spent on hiring workers with bid $R(1-\eta)^u$; call
each fraction $f_u$.
Since the optimal assignment for instance $i$ is to only
hire workers with bids $R(1-\eta)^i$, the \emph{inverse} of the expected competitive ratio can be
bounded as follows.

\begin{align}
\label{eq:eqq}
\sum_{u=0}^k p_u \frac{\sum_{v=0}^u f_v B/(R(1-\eta)^v)}{B/(R(1-\eta)^u)}
&= \sum_{u=0}^k p_u \sum_{v=0}^uf_v(1-\eta)^{u-v}= \sum_{v=0}^k f_v \sum_{u=v}^k p_u (1-\eta)^{u-v},
\end{align}
where the last statement is by expanding the sums and reordering the terms.

The second sum in the RHS of~\Cref{eq:eqq} is bounded by 
\begin{align}
\label{eq:eqq2}
\sum_{u=v}^k p_u (1-\eta)^{u-v}
&= \frac{2\eta(1-\eta)^{k-v}+(1-\eta)}{(k+1)\eta+1}
\le \frac{2\eta+(1-\eta)}{(k+1)\eta+1}
= \frac{1+\eta}{(k+1)\eta+1},
\end{align}
where the first equality is derived exactly similar to~\citet{ZhouCL08}.
Replacing \Cref{eq:eqq2} into the RHS of~\Cref{eq:eqq},
\begin{align*}
\sum_{v=0}^k f_v \sum_{u=v}^k p_u (1-\eta)^{u-v}
&\le \frac{1+\eta}{(k+1)\eta+1}\sum_{v=0}^k f_v \le \frac{1+\eta}{(k+1)\eta+1}
\end{align*}
since by definition $\sum_{v=0}^k f_v\le 1$.

Now by definition of $k$, we know $(1-\eta)^k \le 1/R$, which implies $k+1\ge
\ln(R)/\ln(1/(1-\eta))$. So,
\begin{align*}
\sum_{v=0}^k &f_v \sum_{u=v}^k p_u (1-\eta)^{u-v}
\le \frac{1+\eta}{(k+1)\eta+1}
\le \frac{1+\eta}{\eta \ln(R)/\ln(1/(1-\eta)) + 1}
= O\left( \frac{1}{\ln(R)} \right)
\end{align*}
as $\eta \to 0$, because $\lim_{\eta\ra0} \eta/\ln(1/(1-\eta))=1$. Since we bound 
the inverse of the competitive ratio in the above analysis, then the
competitive ratio is at least $\Omega(\ln R)$.
\end{proof}
}

\iffull
Since the proof is modification of a proof by \citet{ZhouCL08}, we defer the
details to \Cref{sec:appendix}.
\else\lbproof\fi

Finally, it is worth mentioning that if we consider the limit where worker's bids are very small compared to the
budget ($\epsilon \to 0$), \Cref{thm:multiple-deadline-adv-1} shows that
\Cref{fig:4} has an optimal competitive ratio approaching $O(\ln(R))$, the best possible as proven by \Cref{pro:lb1d}.

\section{Random Permutation Setting}
\label{sec:perm}
Now that we have considered the worst-case scenario of inputs, let us consider
more ``well-distributed'' inputs.  We consider the \emph{random permutation
  model} \citep{DevanurH09}. In the random permutation model there is no
assumption about the bids of the workers (they can still be chosen by an
adversary) but we measure the competitive ratio a bit differently: we take all
possible permutations of these workers, and take the average competitive ratio
over all permutations. Intuitively, this assumption makes the task assignment easier because the
premium workers get distributed evenly in the sequence.

As pointed out by~\citet{DevanurH09}, the random permutation model can be
considered as drawing bids from an unknown distribution \emph{without replacement}.
Hence, the random permutation model is very similar to the model that assumes
the bids of the workers are drawn $\iid$ from an unknown distribution.


Throughout this section, we assume that the number of available workers, $n$, is
large and known to us. Also, we restrict our attention to inputs where the offline optimal algorithm
assigns at least a constant fraction of workers, i.e., $\OPT(\sigma) =
\Omega(n)$ for all sequences $\sigma$. 

For the case that the workers are homogeneous,~\citet{SingerM13} provide an
algorithm with competitive ratio of $360$. In this section, we extend their result
to the case the tasks are heterogeneous.  We also improve on their competitive
ratio, though the setting is a bit different: as they were concerned with
mechanism design properties, it was important for \citet{SingerM13} to carefully
tune their payments to (i) incentivize workers to bid honestly and (ii)
compensate all workers, even workers at the very beginning. 
Our algorithm only satisfies the first property. We discuss more about this at the end of this section.

If we knew the sequence of the workers and their bids, we could run $\OA$ to
compute a threshold price $p$ and number of assignments $Q$ where we know $Q$ is
at least a quarter of the optimal number of assignments by~\Cref{pro:42}.
However, since we are in the random permutation model, we can estimate this
threshold with high probability by observing a subset of workers. This idea is
summarized as the $\alg$ ($\RPA$) in~\Cref{fig:0}.
 
\begin{algorithm} 
\begin{algorithmic}
\STATE {{\bf Input}: Parameter $\alpha \in (0, 1)$, set of available tasks $J$, budget $B$.}
\STATE {{\bf Online input}:  Worker bids $\{b_{ij}\}$.}
\INDSTATE[1]{Let $C$ be the first half of workers; do not assign.}
\INDSTATE[1]{Let $\hat{p} := \OA(C,J,B/2)$.}
\INDSTATE[1]{On rest of input, run $\FTP((1 + \alpha)\hat{p},B/2)$.}
\end{algorithmic}
\caption{$\RPA$}
\label{fig:0}

\end{algorithm}

The algorithm observes the first half of workers, assign no tasks but computes a threshold $\hat{p}$ by running $\OA$ 
with a half of the budget
on the sequence of workers
on the first half. 
Given an input parameter $\alpha$, the algorithm then uses the threshold $(1+\alpha)\hat{p}$
and runs $\FTP$ on the second half of the workers with the remaining half of the
budget.\footnote{%
  While we can use the whole budget on the second half of the workers (as we did
  in our experiments) because the algorithm assigns no tasks to the first half
  of the workers, this will only decrease the competitive ratio by a factor of
  half.}
While using the threshold $(1+\alpha)\hat{p}$ instead of $\hat{p}$ might decrease
the performance of our algorithm, 
we show that $(1+\alpha)\hat{p}$ is higher than $p$ with high probability and use this observation
to make the analysis of the competitive ratio of $\RPA$ easier.
\begin{theorem}
\label{pro:4}
Let $\alpha, \delta \in (0, 1)$, and suppose the number of workers is at least
\[
  n = \Omega \left( \frac{1}{\alpha} \log(\frac{1}{\delta})\right) ,
\]
and $\OPT = \Omega(n)$ for every input.
For any sequence of workers $\sigma$, let $\ALG_\RPA(\sigma)$ and $\OPT(\sigma)$
denote the number of tasks assigned by the $\RPA$  and the offline optimal
algorithm for a sequence of workers $\sigma$, respectively. Then,
\[
  \OPT(\sigma)\le 8(1+\alpha)^2/(1-\alpha) \cdot\ALG_\RPA(\sigma),
\]
with probability at least $1-\delta$ for all $\sigma$.
\end{theorem}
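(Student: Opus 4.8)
The plan is to treat $\RPA$ as learning, on the first half, an estimate of the fixed threshold that $\OA$ would use on the whole sequence, and to argue via the random permutation that the second half is a fresh, representative copy of the instance on which that threshold performs well. Concretely, I would first run the (conceptual) algorithm $\OA$ on the full sequence $\sigma$ with the full budget $B$, obtaining a threshold $p$ and a count $Q = \ALG_\OA(\sigma) = B/p$. By \Cref{pro:42} this count already satisfies $Q \ge \OPT(\sigma)/4$, so it suffices to prove the purely algorithmic estimate $\ALG_\RPA(\sigma) \ge \frac{1-\alpha}{2(1+\alpha)^2}\,Q$, since then $\ALG_\RPA(\sigma) \ge \frac{1-\alpha}{8(1+\alpha)^2}\OPT(\sigma)$, which is the claim.

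The engine of the argument is a concentration step. For a fixed price threshold $t$, the maximum matching between workers and tasks using only edges of bid at most $t$ is a function of which workers land in the first half; relabeling one worker changes this matching size by at most one, so a bounded-differences (equivalently, Chernoff-type sampling-without-replacement) inequality shows the first-half value is within a $(1\pm\alpha)$ factor of half the full-sequence value, and hence the same holds for the complementary second half. Because the hypothesis $\OPT = \Omega(n)$ forces all the relevant matching sizes to be $\Omega(n)$, the additive deviation furnished by the inequality becomes a multiplicative $(1\pm\alpha)$ error, and the assumption $n = \Omega(\frac{1}{\alpha}\log\frac{1}{\delta})$ is exactly what pushes the failure probability below $\delta$. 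Monotonicity of the matching size in $t$ lets me reduce this to controlling behaviour essentially at the single comparison threshold $p$, so I avoid a costly union bound over all candidate thresholds.

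From concentration I would sandwich the learned threshold. Writing $\hat Q$ for the count $\OA$ returns on the first half with budget $B/2$, so that $\hat p = (B/2)/\hat Q$, representativeness of the first half gives a lower bound of the form $\hat Q \gtrsim (1-\alpha)\tfrac{Q}{2}$ together with an upper bound forcing $\hat Q \le \tfrac{1+\alpha}{2}Q$; the latter is equivalent to $\bar p := (1+\alpha)\hat p \ge p$, i.e.\ the threshold $\RPA$ uses on the second half is at least the good threshold $p$. I would then bound the assignments of $\FTP(\bar p, B/2)$ on the second half by the two cases used in the proof of \Cref{pro:42}: if it halts because the budget is exhausted, then since each assignment costs at most $\bar p$ it made at least $B/(2\bar p) = \hat Q/(1+\alpha) = \Omega\!\big(Q/(1+\alpha)\big)$ assignments; if it halts because no further worker--task pair of bid at most $\bar p$ remains, then it has produced a maximal matching at threshold $\bar p \ge p$ in the second half, whose size is at least half the maximum matching at threshold $p$ there, which by concentration is at least about $(1-\alpha)Q/2$. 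Either case yields $\Omega\!\big(Q/(1+\alpha)\big)$ up to the $(1-\alpha)$ concentration factor and the maximal-versus-maximum matching factor $\tfrac12$, which is the required estimate.

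The main obstacle is precisely this two-sided control of $\hat p$: the second-half threshold must be simultaneously generous enough (at least $p$, so that many workers are accepted and a large maximal matching is reached) and not wasteful (not much larger than $p$, so the halved budget still buys $\Omega(Q)$ assignments), and reconciling these two requirements is what produces the final constant. The factor $8 = 4\cdot 2$ comes from the $\OA$ approximation of \Cref{pro:42} and from spending only half the budget; the $(1+\alpha)^2$ comes from inflating the learned threshold by $(1+\alpha)$ and from the budget-limited case; and the $(1-\alpha)$ comes from the downward concentration of the second-half matching. A secondary technical point I would have to nail down is that the order-dependent greedy counts produced by $\FTP$ can be cleanly replaced by max-matching sizes (using maximal $\ge \tfrac12$ maximum, as in \Cref{pro:42}), so that the objects I concentrate are well-defined functions of the random partition alone.
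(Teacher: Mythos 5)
Your plan follows the paper's proof quite closely---learn $(\hat p,\hat Q)$ on the first half, sandwich $(1+\alpha)\hat p$ around the full-sequence price $p=B/Q$, and bound $\FTP$ on the second half through the budget-exhausted versus maximal-matching dichotomy, with \Cref{pro:42} supplying the factor $4$---but the step you build everything on is false as stated. You claim that for a fixed threshold $t$ the first-half maximum matching is within a $(1\pm\alpha)$ factor of \emph{half} the full-sequence value, and you use the upper direction of this to force $\hat Q\le\tfrac{1+\alpha}{2}Q$, equivalently $(1+\alpha)\hat p\ge p$. Only the lower direction is true: restricting a full-sequence matching to the first half halves it in expectation, but the first-half optimum concentrates around its \emph{own} mean, which can lie anywhere between half the full value and the full value itself. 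Concretely, take $n$ workers, $m=n/2$ tasks, every bid equal to $1$, and $B=n$. Then $Q=n/2$ and $p=B/Q=2$, while the first half alone ($n/2$ workers, budget $B/2=n/2$) supports $\hat Q=n/2=Q$ assignments, so $\hat p=1$ and your inequality $\hat Q\le\tfrac{1+\alpha}{2}Q$ fails for every $\alpha<1$; no bounded-differences inequality can repair a claim whose expectation is off by a factor of $2$. This instance satisfies $\OPT=\Omega(n)$ and scales to arbitrary $n$, so it sits squarely inside the theorem's hypotheses. The paper does not argue through matching concentration at all: it applies a hypergeometric tail bound (Chv\'{a}tal) to the number of the $Q$ \emph{fixed} good workers that land in the first half, and converts that count into the price sandwich of \Cref{eq-phat} by a two-case analysis using the identities $B=pQ$ and $B/2=\hat p\hat Q$. (For what it is worth, the same slack-budget instance also stresses the paper's own case~2---the regime where $p=B/Q$ exceeds every actual bid is the delicate point of the whole argument---but the paper's mechanism is the hypergeometric count of a fixed set, not the two-sided matching concentration you propose, and your version is the one that is demonstrably false as a lemma.)

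Separately, even granting the sandwich, your constants do not close. In the terminal (maximal-matching) case you obtain only half of $(1-\alpha)Q/2$, i.e.\ $(1-\alpha)Q/4$ assignments, and $(1-\alpha)Q/4 \ge \tfrac{1-\alpha}{2(1+\alpha)^2}Q$ requires $(1+\alpha)^2\ge 2$. For $\alpha<\sqrt{2}-1$ your argument therefore yields a competitive ratio of $16/(1-\alpha)$ rather than the claimed $8(1+\alpha)^2/(1-\alpha)$, so as written you prove the statement only with a weaker constant.
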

\begin{proof}
If we knew the sequence of the workers and their bids, we could run $\OA$ to compute a price $p$ and number of assignments $Q$. 
We know $Q\ge \OPT/4$ by \Cref{pro:42}.  Let us refer to these $Q$ workers hired by $\OA$ as \emph{good workers}.

We first claim that if we use a price of $(1+\alpha)\cdot p$ instead of $p$, the number of assignments will decrease
by a factor of $(1+\alpha)$. This is because we can still assign a task to workers who $\OA$ assigned a task to with a threshold of $p$, but now we also have access 
to workers with bids in $(p, (1+\alpha)\cdot p]$ which may cause us to exhaust the budget faster.

Second, we can estimate $p$ from the first half of the workers.
Call this estimate $\hat{p}$. We claim that at least $(1-\alpha)Q/2$ of the good
workers are in the second half of the sequence with high probability, and
\begin{equation}
\label{eq-phat}
p \le (1+\alpha)\hat{p}\le (1+\alpha)p/(1-\alpha) .
\end{equation}
This together with the first claim and the 4-approximation of $\OA$ will give us
the competitive ratio claimed in the statement of the theorem.

To complete the proof, we first show that there are enough good workers in the second half of the sequence. Since we
are in the random permutation model, then with high probability, we know the
number of good workers in the first half of the workers is in $[Q(1-\alpha)/2,
Q(1+\alpha)/2]$. \citet{Chvatal79} show that this probability is bounded by
\begin{align*}
  \hspace{-5mm}
1-2\sum_{i=Q(1+\alpha)/2}^Q \frac{{i \choose Q/2}{n/2-i \choose n-Q}}{{n \choose n/2}}&=
\begin{cases}
1-O(e^{-\alpha Q^2/n}) & \text{ if } Q\ge \frac{n}{2},\\
1-O(e^{-\alpha (n-Q)^2Q/n^2}) & \text{ if } Q< \frac{n}{2}\\
\end{cases} = 1 - O(e^{-\alpha n}).
\end{align*}
To prove that \Cref{eq-phat} holds, we consider two cases.
\begin{enumerate}
\item The number of good workers in the first half is $[Q(1-\alpha)/2, Q/2)$.
  In this case $\hat{p}\ge p$ because we can obviously use the budget to assign
  tasks to all the good workers. And if there is leftover budget, we might be
  able to assign tasks to workers with bid greater than $p$. So,
\begin{align}
\hat{p}\frac{Q}{2}(1-\alpha) &\le \frac{B}{2} = \frac{pQ}{2}  \implies
\hat{p} \le \frac{1}{(1-\alpha)}p \notag,
\end{align}
since $B = PQ$ by the last line in \Cref{fig:9}. Combined with $\hat{p}\ge p$, the condition in~\Cref{eq-phat} holds in this case.

\item The number of good workers in the first half is $[Q/2, Q(1+\alpha)/2]$.

In this case $\hat{p}\le p$.  Suppose by contradiction that $\hat{p}>p$. This means the workers hired by the $\OA$
with bids of at most $\hat{p}$ is at least $S=B/(2\hat{p})$. We know $S\ge Q/2$ because all good workers have bids of 
most $p$. So,
\begin{align*}
\frac{B}{2\hat{p}} = \frac{pQ}{2\hat{p}} &\ge \frac{Q}{2} \implies
p \ge \hat{p},
\end{align*}
which is a contradiction. Thus,
\begin{align*}
  \hat{p} \frac{Q}{2}(1 + \alpha) &\geq \frac{B}{2}=\frac{pQ}{2} \implies
  \hat{p} \ge \frac{1}{1+\alpha}p.
\end{align*}
Combined with $\hat{p}\le p$, the condition in~\Cref{eq-phat} also holds in this
case, concluding the proof.
\end{enumerate}
\end{proof}

\paragraph*{A Note On Incentive Compatibility.}
A problem closely related to ours is to design \emph{incentive compatible
  mechanisms} in a setting where the workers arrive online but their bids are
private and unknown to the requester \citep{SingerM13}. While our focus in this
paper was mainly on designing competitive online algorithms, we point out that
both of our online algorithms, $\heterogen$ (\Cref{fig:4}) and $\RPA$
(\Cref{fig:0}), can lead to incentive compatible mechanisms in a straightforward
way. We briefly describe this relation here and omit the formal proofs. 

For any worker $i$, let $\phi_i$ denote the largest price that satisfies the potential $\phi$
upon the arrival of worker $i$.
Suppose in $\heterogen$, instead of checking the bids of each worker against
the potential function $\phi$, the requester simply offers $\phi_i$
to worker $i$ (without knowing the worker's true bids). One can simply
check that by allocating this budget to the $i$-th worker instead of her bid (which is never bigger than $\phi_i$ by definition)
in $\heterogen$, the proof of~\Cref{thm:multiple-deadline-adv-1} still
holds and consequently the modified $\heterogen$ still achieves 
the promised competitive ration. However, in this variant of
$\heterogen$, the dominant strategy for a worker is to reveal her true bid since
her utility (i.e, the payment she receives) is independent of her true bid ($0$
if $b_{ij} > \phi_i$ and $\phi_i$ otherwise). Similarly, we can argue this property
for the $\RPA$ (with the modification that we pay the workers equal to the fixed threshold price not their true bid) also. Although we are dealing with heterogeneous
tasks, the analysis of incentive compatibility is similar to~\citet{SingerM13} which only consider homogenous tasks.

\section{Experiments}

In this section we describe some experiments to examine the performance of 
$\heterogen$ (\Cref{fig:4}) and $\RPA$ (\Cref{fig:0}) on \emph{synthetic} 
data.\footnote{%
  We slightly modified the $\RPA$ in our experiments by using all the budget
  (instead of half) on the second half of the workers.  Note that this
  modification can only improve the performance in our implementation compared
  to our theoretical guarantee for \Cref{fig:0}.}
 
We conduct two sets of experiments. The first set focuses on the performance under \emph{homogeneous} and \emph{adversarially} chosen bids 
(similar to the hard distribution in \Cref{pro:lb1d}). In the second set of 
experiments, we use a sequence of heterogeneous workers and bids produced from a
family of distributions meant to model input in practical settings.

\paragraph{Adversarial Homogenous Bids.}
The input in this setting is created similar to the hard distribution in
\Cref{pro:lb1d} with groups of workers that are willing to solve \emph{any} of
the tasks with the same bid, but the bids may vary across different groups of workers.
Since these instances are hard inputs from the adversarial setting (as shown in
\Cref{pro:lb1d}), they are a good benchmark for comparing the algorithms in
adversarial setting.

\begin{figure}
\centering
\includegraphics[width=0.5\textwidth]{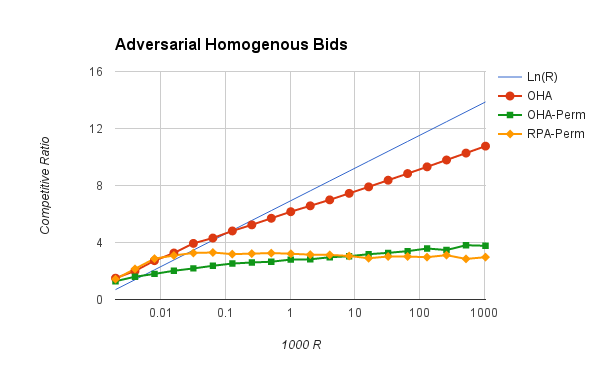}
\caption{Competitive Ratio of $\heterogen$ and $\RPA$ against an adversarially chosen input. The line with circle shows the competitive ratio of $\heterogen$ on the adversarial sequence. 
The lines with diamond and square show the competitive ratio of $\RPA$ and $\heterogen$ on the randomly permuted sequence, respectively.
A line for $\ln(R)$ as the theoretical bound on the competitive ratio of $\heterogen$ is also provided. The plot is presented in log-scale.
}
\label{exp:1}
\end{figure}

Formally, the input is created as follows. 
We vary $R$ in the range from $2$ to $2^{20}$ (on all powers of two) and set $B = 2R$. The bids of the workers are $R,R/2,\ldots,R/2^{i}$, 
where in each experiment, $i$ is chosen uniformly at random from $\{1, 2, \ldots, \log_2{R}\}$; for each group with bid $R/2^i$, there will be $B \cdot 2^i/R$ workers. 
We always pad the input with a set of workers with bid $R$ to ensure that the
total number of workers in each instance is the same (i.e., we set $n = 8R)$.
We set the number of tasks to be equal to the number of workers (so a worker will never run out of tasks to solve 
no matter what the previous workers have done). 
Finally, for each value of $R$, we create the input $10,000$ times and compute the average competitive ratio of $\heterogen$.
Note that since in this distribution, there is essentially no suitable worker in the second half, $\RPA$ performs very poorly and hence we do not 
include it in this part of the experiment. We then ran the same experiment (for both $\heterogen$ and $\RPA$) on the same input
but this time we permute the workers randomly; \Cref{exp:1} shows the result for this experiment. 

The experiment over the truly adversarial input shows that $\heterogen$ outperforms the theoretical guarantee of
$O(R^{\epsilon}\ln{(R)})$, ($\epsilon = 0.5$  here) and grows roughly like
$\ln{R}$, suggesting that the true performance of $\heterogen$ may be closer to
our theoretical lower bound from~\Cref{pro:lb1d}. 

The experiment over the randomly permuted input shows that $\RPA$ is obtaining a
constant competitive ratio even though the input does not satisfy all the 
properties required by \Cref{pro:4} (e.g., $\OPT = \Omega(n)$ assumption). Interestingly, even $\heterogen$ performs very 
well over the input (only slightly worse than $\RPA$), and also much better than the theoretical guarantee proved in~\Cref{thm:multiple-deadline-adv-1}. 
Finally note that although the sequence of workers were adversarially chosen to begin with, a random permutation of this sequence makes
the problem significantly easier; to the point that even $\heterogen$ which is designed to perform against worst case sequences will also 
achieve a constant competitive ratio.

\paragraph{Uniform Heterogeneous Bids.}
We consider a similar distribution (with minor modification of parameters) as the one introduced by~\citet{GoelNS14}. 
They use this setting to analyze the performance of their proposed algorithms for heterogeneous task assignment in the \emph{offline} case. These inputs are aimed to capture more 
realistic scenarios compared to our experiments in the previous part. 

\begin{figure}
\centering
\includegraphics[width=0.5\textwidth]{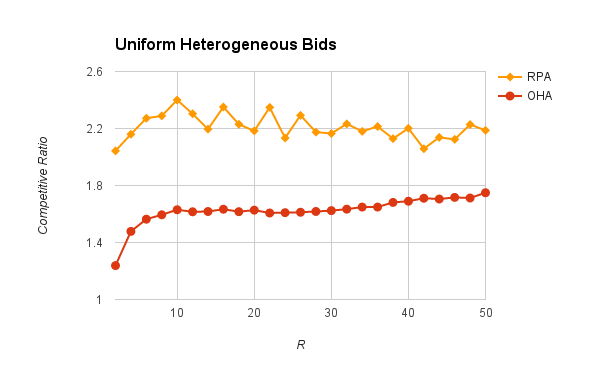}
\caption{Competitive Ratio of $\heterogen$ and $\RPA$ against the uniform heterogeneous bids. 
}
\label{exp:2}
\end{figure}

The distribution is as follows. We vary $R$ in range from $2$ to $50$ and set $B = 200$. The workers are heterogeneous and are only willing to solve a 
subset of all available tasks. In particular, we create $n = 200$ workers and $m = 200$ tasks and create a random graph with probability of edge formation equal to $0.05$ between 
any of the (worker, task) pairs. Additionally, over each realized edge, we choose the bid of the worker for the given task uniformly at random from  $\{1,2,\ldots, R\}$ 
(independently of the bids chosen over the other edges for this worker).
For each choice of $R$, we repeat the experiment $80$ times and report the average competitive ratio of the 
algorithms over the runs. \Cref{exp:2} illustrates the results of this experiment. 

As seen in \Cref{exp:2} both algorithms perform very well over these inputs,
obtaining a competitive ratio that is nearly independent of $R$. Interestingly,
$\heterogen$ performs distinctly better than its theoretical guarantee, even outperforming $\RPA$. 
We suspect that this is because $\heterogen$ is more adaptive than $\RPA$,
since it uses a
varying price threshold while $\RPA$ sticks to a fixed price (after
discarding half the input). Hence $\RPA$ 
exhausts the budget sooner than necessary, while additionally losing the 
contribution of half of the workers.

\paragraph{Acknowledgement}
We would like to thank Steven Zhiwei Wu for helpful discussion in the earlier stages of this work. 
We are also grateful to the anonymous reviewers for their insightful comments. 

\bibliography{bib}
\bibliographystyle{aaai}

\iffull
\appendix

\section{Proof of Lower Bound on Competitive Ratio}
\label{sec:appendix}

\thmlb*
\lbproof

\fi

\end{document}